\DeclareMathOperator{\sgn}{sgn}
\newcommand{\E}{\mathbb{E}}
\newcommand{\Prob}{\mathbb{P}}
\newcommand*{\rom}[1]{\expandafter\@slowromancap\romannumeral #1@}
\newtheorem{theorem}{Theorem}
\newtheorem{remark}{Remark}
\title{\LARGE \bf Optimal Estimation with Limited Measurements and Noisy Communication}
\author{Xiaobin Gao, Emrah Akyol, and Tamer Ba\c{s}ar \thanks{This research was supported in part by NSF under grant CCF 11-11342, and in part by the U.S. Air Force Office of
Scientific Research (AFOSR) MURI grant FA9550-10-1-0573.}\thanks{All authors are with the Coordinated Science Laboratory, University of Illinois at Urbana-Champaign, Urbana, IL 61801; emails: \{xgao16, akyol, basar1\}@illinois.edu}}
\begin{document}

\maketitle
\thispagestyle{empty}
\pagestyle{empty}

\begin{abstract}
This paper considers a sequential estimation and sensor scheduling problem with one sensor and one estimator.  The sensor makes sequential observations about the state of an underlying memoryless stochastic process, and makes a decision as to whether or not to send this measurement to the estimator.  The sensor and the estimator have the common objective of minimizing expected distortion in the estimation of the state of the process, over a finite time horizon, with the constraint that the sensor can transmit its observation only a limited number of times.  As opposed to the prior work where communication between the sensor and the estimator was assumed to be perfect (noiseless), in this work an additive noise channel with fixed power constraint is considered; hence, the sensor has to encode its message before transmission.  For some specific source and channel noise densities, we obtain the optimal encoding and estimation policies in conjunction with the optimal  transmission schedule. The impact of the presence of a noisy channel is analyzed numerically based on dynamic programming. This analysis yields some rather surprising results such as a phase-transition phenomenon in the number of used transmission opportunities, which was not encountered in the noiseless communication setting.

\end{abstract}

\section{Introduction}
Joint sensor scheduling and remote state estimation problems have recently gained renewed interest due to proliferation of energy limited sensor networks, see e.g., \cite{athans1972determination,mehra1976optimization, Hespanha07,Imer10, Lipsa11,Nayyar13,Wu13} and the references therein.

In \cite{Imer10}, the following problem was considered: Estimate a one-dimensional discrete-time stochastic process distributed independently and identically (i.i.d.) over a decision horizon of length $T$ using only $N \leq T$ measurements. Both the measurement and the estimation of the process were carried out sequentially by two different decision makers, the sensor and the estimator. Over the decision horizon of length $T$, the sensor had exactly $N$ opportunities to transmit its observation to the estimator. These transmissions were assumed to be {\it error and noise free}, and the problem posed was to jointly determine the best sensing and estimation policies that minimize the average estimation error between the  process and its estimate.
Optimum transmission decisions were sought in the class of threshold based strategies and the optimal decision sequence, i.e., the evolution of the thresholds in time based on the realization of the process, were obtained via dynamic programming. Later, using majorization and related techniques, such threshold based strategies were shown to be optimal for this problem \cite{Lipsa11,Nayyar13} and even for more general settings where the process is not necessarily memoryless \cite{Lipsa11,Nayyar13}.

Note that all prior work considered the problem with perfect (noiseless) communication between the sensor and the estimator, which was an important starting point for this line of research. More realistic scenarios, however, are those where the transmission channels are noisy---a problem that presents several challenges. The main difficulty here is that with noise in the channel, and under an average power constraint, the sensor has to encode its message before transmission, and the estimator has to consider this encoding mapping in its estimation mapping. However, the optimal zero-delay encoding/estimation mappings are not known in general, except in the Gaussian source-Gaussian channel case; see e.g., \cite{CoverBook}, for which the mappings are known to be linear (or affine if the random variables are not zero-mean) for all power levels. Even in this special case, however, once the sensor observation is thresholded, the distribution is no longer Gaussian, and hence linear (or affine) mappings may no longer be optimal, making the problem fairly intractable.

In \cite{Akyol13}, the settings where linear (or affine) strategies are optimal for zero-delay communication have been characterized in terms of the source and the additive channel noise distributions. It was shown that if and only if a ``matching condition", defined over the characteristic functions of the source and the channel noise, is satisfied, then the linear encoding/estimation policies are optimal. This characterization enables tractability of the zero-delay communication problems, beyond the Gaussian source-Gaussian channel case. Implications of this matching condition on the adversarial zero-delay communication was studied in \cite{akyol2013optimal}, where it was shown that the optimal strategy for an adversarial agent with fixed jamming power is to render the effective channel noise distribution to match that of the source, so that the matching conditions are satisfied, and the optimal encoding/decoding mappings are linear.

In \cite{Gao15}, we applied the matching condition of \cite{Akyol13} to the problem of sensor scheduling and remote estimation, and showed optimality of threshold based sensor scheduling policies and affine encoding/estimation policies for the case of Laplacian source and Gamma channel noise, with {\it soft constraints} over the number of transmissions. In this paper, building on this recent prior work \cite{Gao15}, we consider a hard constraint on the number of transmission times over a finite time horizon, i.e, we extend the work in \cite{Imer10} to noisy communication settings for Laplacian source and Gamma channel noise. Using a dynamic programming approach, as done in \cite{Imer10}, we obtain the optimal transmission scheduling policy.  Beyond the expected results, we notice some rather surprising effects of the noisy communication considerations in this class of remote estimation problems. For example, over a time horizon $T$ and with a hard transmission limit, $N\leq T$, if the state realizations were so that at time step $K$, the sensor has used only $N-T+K$ transmissions out of $N$, the intuitively appealing solution to the noiseless variation of the problem was to transmit all the observed state realizations without any thresholding, i.e., the threshold is effectively set to zero for samples at time steps $K+1, \dots, T$. However, in the noisy setting, we have noticed that this is not the case, the sensor may not use all the transmission opportunities left. This is due to the fact that threshold information--that is whether or not the state sample belongs to an interval-- may be  more valuable than a ``noisy" observation of the state. In fact, depending on the signal-to-noise ratio (SNR) of the channel, there is a fixed number of useful (in average) number of transmissions, and allowing transmissions more than this number, on the average, does not help decrease the expected mean square error (MSE). 

The rest of the paper is organized as follows. In Section \ref{ProblemFormulation}, we formulate the problem. In Section \ref{Prior Work}, we present some preliminary results. In Section \ref{Main result}, we present and prove the main results. In Section \ref{numerical results}, we present and discuss some numerical results. Finally in Section \ref{Conclusions}, we include conclusions and discuss some future directions.

\section{Problem Formulation}
\label{ProblemFormulation}
\subsection{System Model}
\begin{figure}[h]
\centering
\includegraphics[height=24mm]{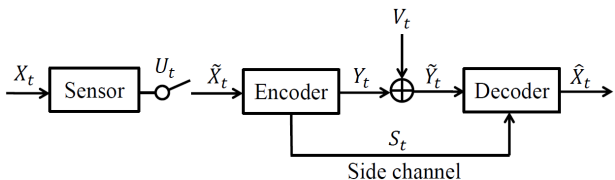}
\caption{System model}
\end{figure}
Consider a remote sensing and estimation system propagating in discrete time over a finite time horizon, namely, $t=1,2,\ldots,T$. In the system, there is \emph{one} remote sensor, \emph{one} encoder and \emph{one} estimator (which is also called ``decoder"). The sensor takes measurements on a one-dimensional, independent identically distributed (i.i.d.) random process $\{X_t\}$, which has Laplace distribution with parameters $(0,\lambda^{-1})$.
Assume that at time $t$, the sensor takes a perfect measurement on $X_t$, and then decides whether to transmit the measurement to the encoder or not. Let a binary variable $U_t$ be the sensor's decision at time $t$, where $U_t=0$ stands for no transmission and $U_t=1$ stands for transmission. The sensor is restricted to make no more than $N$ times of transmissions over the time horizon, that is
$$
\sum_{t=1}^T U_t \leq N
$$
Let $\tilde{X}_t$ be the message received by the encoder. Assume that the transmission from the sensor to the encoder is perfect, then
\begin{equation*}
\tilde{X}_t =
\begin{cases}
X_t, & \mbox{if } U_t=1 \vspace{0.2cm}
\\ \epsilon, & \mbox{if } U_t=0
\end{cases}
\end{equation*}
where $\epsilon$ is a free symbol standing for no transmission is made. After receiving the message from the sensor, the encoder sends an encoded message to the communication channel, denoted by $Y_t$, $Y_t\in \mathbb{R}$. The encoder is not able to send any message to the communication channel if it does not receive any message from the sensor, which is denoted by $Y_t=\epsilon$. The encoder has average power constraint:
$$
\mathbb{E}[Y_t^2|U_t=1]\leq P_T
$$
where $P_T$ is known, and this holds for all $t$. Assume that the encoded message is disturbed by an additive channel noise $V_t$. $\{V_t\}$ is an i.i.d. random process with Gamma distribution $\Gamma(k,\theta)$, which is independent of $\{X_t\}$. Let $\tilde{Y}_t$ be the noise-corrupted message received by the decoder, we have
\begin{equation*}
\tilde{Y}_t =
\begin{cases}
Y_t+V_t, & \mbox{if } Y_t\neq\epsilon \vspace{0.2cm}
\\ \epsilon, & \mbox{if } Y_t=\epsilon
\end{cases}
\end{equation*}
When sending the encoded message $Y_t$, the encoder is able to transmit the sign of $\tilde{X}_t$ to the decoder via a noiseless side channel, denoted by $S_t$. Again, the encoder is not able to send any message to the decoder via the side channel if it does not receive any message from the sensor, then
\begin{equation*}
S_t =
\begin{cases}
\sgn(\tilde X_t), & \mbox{if } \tilde X_t\neq\epsilon \vspace{0.2cm}
\\ \epsilon, & \mbox{if } \tilde X_t=\epsilon
\end{cases}
\end{equation*}
Based on the received messages $\tilde{Y}_t$ and $S_t$, the decoder generates an estimate on $X_t$, denoted by $\hat{X}_t$. The decoder is charged for distortion in estimation. Assume that the distortion function $\rho(X_t,\hat{X}_t)$ is the squared error $(X_t-\hat{X}_t )^2$, and the cumulative distortion is the sum of stage-wise squared errors over the decision horizon.

\subsection{Decision Strategies}
Assume that at time $t$, the sensor has memory on all its measurements by $t$, denoted by $X_{1:t}$, and all the decisions it has made by $t-1$, denoted by $U_{1:t-1}$. The sensor makes decision $U_t$ based on its current information $(X_{1:t},U_{1:t-1})$, that is
$$
U_t=f_t(X_{1:t},U_{1:t-1})
$$
where $f_t$ is the sensor scheduling policy at time $t$ and $\textbf{f}=\{f_1,f_2,\ldots,f_T\}$ is the sensor scheduling strategy.

Similarly, at time $t$, the encoder is assumed to have memory on all the messages received from the sensor by $t$, denoted by $\tilde{X}_{1:t}$, and all the encoded messages it has sent to the communication channel by $t-1$, denoted by $Y_{1:t-1}$. The encoder generates the encoded message $Y_t$ based on its current information $(\tilde{X}_{1:t},Y_{1:t-1})$, that is
$$
Y_t=g_t(\tilde{X}_{1:t},Y_{1:t-1})
$$
where $g_t$ is the encoding policy at time $t$ and $\textbf{g}=\{g_1,g_2,\ldots,g_T\}$ is the encoding strategy.

Finally, it is assumed that at time $t$, the decoder has memory on all the messages received from the encoder by $t$, denoted by $\tilde{Y}_{1:t},S_{1:t}$. The decoder produces estimate $\hat{X}_t$ based on its current information $(\tilde{Y}_{1:t},S_{1:t})$, namely
$$
\hat{X}_t=h_t(\tilde{Y}_{1:t},S_{1:t})
$$
where $h_t$ is the decoding policy at time $t$ and $\textbf{h}=\{h_1,h_2,\ldots,h_T\}$ is the decoding strategy.
\begin{remark} \label{Shared information} Although we do not assume that the encoder and the decoder have memory on $U_{1:t}$, yet they can deduce $U_{1:t}$ from $\tilde{X}_{1:t}$ and $\tilde{Y}_{1:t}$, respectively. Similarly, the decoder can deduce the previous estimates $\hat{X}_{1:t-1}$ from $(\tilde{Y}_{1:t-1},S_{1:t-1})$ and $\{h_1,h_2,\ldots,h_{t-1}\}$. \end{remark}

\subsection{Assumptions on the Parameters}
Let $\sigma_V^2$ be the variance of $V_t$. Since $V_t$ has gamma distribution $\Gamma(k,\theta)$, $\sigma_V^2=k\theta^2$. Let $\alpha:=\lambda\sqrt{P_T}$, and $\gamma:=\frac{P_T}{\sigma_V^2}$. $\gamma$ is also called signal to noise ratio (SNR). Assume that
$$
\theta=\sqrt{P_T}
$$
Then, we have
\begin{equation}
\label{assumptions on parameters}
\alpha=\lambda\theta \mbox{, } \gamma=\frac{1}{k}
\end{equation}

A detailed explanation of the motivation for these assumptions can be found in \cite[Remark 2]{Gao15}.



\subsection{Optimization Problem}
Consider the system described above, given the time horizon $T$, the number of transmission opportunities $N$, the statistics of $\{X_t\}$ and $\{V_t\}$, and the power constraint $P_T$. Determine the sensor scheduling strategy, encoding strategy, and decoding strategy $(\textbf{f},\textbf{g},\textbf{h})$ that minimize the expected value of the sum of stage-wise estimation costs over the time horizon, that is,
$$
J(\textbf{f},\textbf{g},\textbf{h})= \mathbb{E} \left\{\sum_{t=1}^T {(X_t-\hat{X}_t)}^2\right\}
$$
subject to the communication constraint of the sensor and the power constraint of the encoder.

\section{Prior Work}
\label{Prior Work}
Consider the sensor scheduling and remote estimation problem described above, but with the following modifications:
\begin{enumerate}
    \item The time horizon $T=1$ (and hence we suppress in this section the subscript for time in all the expressions).
    \item The sensor is not constrained by the number of transmissions. Instead, it is charged a cost $c$ if it transmits its observation. No transmission means no communication cost.
    \item The optimization problem is to design the scheduling policy, encoding policy, and decoding policy $(f,g,h)$ that minimize the following cost function:
$$
J(f,g,h)= \mathbb{E} \left\{cU + {(X-\hat{X})}^2\right\}
$$
\end{enumerate}


\begin{theorem}\cite{Gao15} Consider the communication problem described above, and restrict the sensor to apply symmetric threshold based policy, that is
$$
U=f(X)=
\begin{cases}
1, & \mbox{if } |X|>\beta \vspace{0.2cm}
\\ 0, & \mbox{if } |X|\leq\beta
\end{cases}
$$
where $\beta>0$ is the threshold. Then,
\begin{enumerate}
    \item The optimal scheduling policy is the one with threshold $\beta^\ast = \sqrt{c+m}$, where $m=\frac{1}{\gamma+1}\frac{1}{\lambda^2}$
    \item The optimal encoding and decoding policies are as follows
\end{enumerate}
\label{One-stage Problem}
$$
\begin{array}{lcr}
g(\tilde{X})\;\;\;\;=
\begin{cases}
\alpha|\tilde{X}|-\alpha\beta^\ast-\alpha\lambda^{-1}, & \mbox{if } \tilde{X} \neq \epsilon \vspace{0.2cm}
\\ \epsilon, & \mbox{if } \tilde{X} = \epsilon
\end{cases}
\vspace{0.2cm}
\\h(\tilde{Y},S)=
\begin{cases}
S\cdot\left(\frac{1}{\alpha}\frac{\gamma}{\gamma+1}\tilde{Y}+\frac{\gamma}{\gamma+1}\lambda^{-1}+\beta^\ast \right),  \mbox{if } \tilde{Y},S \neq \epsilon \vspace{0.2cm}
\\ 0, \;\;\;\;\;\;\;\;\;\;\;\;\;\;\;\;\;\;\;\;\;\;\;\;\;\;\;\;\;\;\;\;\;\;\;\;\;\;\;\;\;\;\;\;\; \mbox{if } \tilde{Y},S = \epsilon
\end{cases}
\end{array}
$$
where $\alpha = \lambda\sqrt{P_T}$, $\gamma=\frac{P_T}{k\theta^2}$.
\end{theorem}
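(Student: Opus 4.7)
The plan is to decompose the problem using the noiseless sign side channel and the symmetry of the Laplacian, then invoke the matching condition of \cite{Akyol13} to make the encoding/decoding subproblem affine, and finally optimize the threshold by a one-dimensional calculus argument. Under the symmetric threshold policy, on the event $\{U=0\}=\{|X|\leq\beta\}$ the decoder's posterior is symmetric around $0$, so $\hat X=0$ is the conditional mean and contributes $\E[X^2\,\mathbf{1}_{\{|X|\leq\beta\}}]$ to the cost. On the event $\{U=1\}$, the side channel delivers $\sgn(X)$ exactly, so the task reduces to estimating $|X|$ from $\tilde Y$ and setting $\hat X = S\,\hat{|X|}$. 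By the memoryless property of the exponential tail of the Laplacian, $W := |X|-\beta$ conditioned on $U=1$ is $\mathrm{Exp}(\lambda)$, so the subproblem is: encode the shifted source $\beta+W$ under average power $P_T$, transmit through additive Gamma noise $V\sim\Gamma(k,\theta)$, and decode.

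Next, I would invoke the matching condition of \cite{Akyol13}. Under the parameter choices $\theta=\sqrt{P_T}$, $\alpha=\lambda\theta$, and $\gamma=1/k$, the characteristic functions of the centered exponential source and the Gamma channel noise satisfy that matching condition, which makes an affine encoder together with the corresponding affine MMSE decoder globally optimal, and not merely optimal within the linear class. The encoder is then pinned down by centering and power-matching: taking $Y=\alpha(|\tilde X|-\beta-\lambda^{-1})$ gives $\E[Y\mid U=1]=0$ and $\E[Y^2\mid U=1]=\alpha^2\lambda^{-2}=P_T$, saturating the constraint. The decoder is the affine MMSE of $|X|$ given $\tilde Y=Y+V$: with $\E[\tilde Y\mid U=1]=k\theta$, $\Var(\tilde Y\mid U=1)=P_T(\gamma+1)/\gamma$, and $\mathrm{Cov}(|X|,\tilde Y\mid U=1)=\alpha\lambda^{-2}$, substituting $\alpha=\lambda\sqrt{P_T}$ and $k\theta=\sqrt{P_T}/\gamma$ recovers the stated formula for $h$, together with $\hat X = S\,\hat{|X|}$.

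The residual MMSE on the transmission event is $m = \lambda^{-2} - \mathrm{Cov}^2/\Var(\tilde Y) = \frac{1}{\gamma+1}\lambda^{-2}$, so the total cost as a function of $\beta$ is $J(\beta) = \E[X^2\,\mathbf{1}_{\{|X|\leq\beta\}}] + (c+m)\,\Prob(|X|>\beta)$. Using $p_X(x) = \frac{\lambda}{2}e^{-\lambda|x|}$, a direct computation yields $J'(\beta) = \lambda e^{-\lambda\beta}\bigl(\beta^2 - (c+m)\bigr)$, whose unique positive root is $\beta^* = \sqrt{c+m}$, and the sign of $J'$ identifies this as the global minimum.

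The main obstacle is the second step: establishing global, rather than merely within-class, optimality of the affine encoder/decoder for the shifted exponential source paired with the Gamma channel noise. This hinges on verifying the matching condition of \cite{Akyol13} under the parameter choices $\theta=\sqrt{P_T}$, $\gamma=1/k$. Once affine optimality is granted, everything else is routine: centering and power-normalizing the encoder, writing down the linear MMSE formula, and performing a one-dimensional optimization over the threshold.
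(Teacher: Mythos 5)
Your proposal is correct and follows essentially the same route as the paper's source for this theorem: the paper states Theorem~\ref{One-stage Problem} without proof, citing \cite{Gao15}, which---exactly as you propose---uses the sign side channel and Laplacian symmetry to reduce the transmission event to affine zero-delay coding of the shifted exponential magnitude, invokes the matching condition of \cite{Akyol13} for global optimality of the affine encoder/decoder, and then optimizes the threshold, yielding $\beta^\ast=\sqrt{c+m}$ with $m=\frac{1}{\gamma+1}\frac{1}{\lambda^2}$. The one step you defer, verifying the matching condition, is immediate under the stated parameters: given $U=1$, the scaled centered magnitude $\alpha(|X|-\beta)$ is $\Gamma(1,\theta)$-distributed while the noise is $\Gamma(k,\theta)$, and their characteristic functions satisfy $(1-i\theta\omega)^{-k}=\left[(1-i\theta\omega)^{-1}\right]^{k}$ with $k=1/\gamma$, which is precisely the required relation.
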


\section{Main Results}
\label{Main result}
We first define $E_t$ as the number of communication opportunities left at time $t$, i.e.,
$$
E_t = N-\sum_{i=1}^{t-1} U_i
$$
Then, the communication constraint can be expressed by
$$
U_t \leq E_t, \;\;\; \forall \; t = 1,2,\ldots,T
$$
By Remark \ref{Shared information}, $U_{1:t-1}$ is the common information shared by the sensor, the encoder, and the decoder, and hence $E_t$ is also known by all the decision makers. Then we have the following theorem.
\begin{theorem}
\label{consider only on current states}
Consider the sensor scheduling and remote estimation problem described in section \ref{ProblemFormulation}.
Without loss of optimality, we can restrict the sensor scheduling, encoding and decoding policies to the forms:
$$
U_t = f_t (X_t,E_t) \mbox{, } Y_t = g_t (\tilde{X}_t,E_t) \mbox{, } \hat{X}_t = h_t(\tilde{Y}_t,S_t,E_t)
$$
\end{theorem}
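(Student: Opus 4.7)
The plan is to induct backwards on $t$, showing that the optimal cost-to-go
$$
J_t^\ast(E_t)\;=\;\min_{(f_s,g_s,h_s)_{s\geq t}}\;\mathbb{E}\!\left[\sum_{s=t}^T (X_s-\hat X_s)^2\;\Big|\;E_t\right]
$$
depends only on $E_t$ and is attained by policies of the claimed restricted form. Three facts drive this: the i.i.d.\ structure of $\{X_t\}$, the i.i.d.\ structure of $\{V_t\}$ independent of $\{X_t\}$, and the observation (together with Remark \ref{Shared information}) that past decisions $U_{1:t-1}$ enter the cost and the constraint only through $E_t=N-\sum_{i<t}U_i$, which is common knowledge.

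For the base case $t=T$, only $(X_T-\hat X_T)^2$ contributes and the remaining constraint is simply $U_T\leq E_T$. Since $(X_T,V_T)$ is independent of the entire past history, the conditional problem given any realization of that history reduces to a one-shot problem parametrized only by $E_T$. Applying the one-shot optimization in the style of Theorem \ref{One-stage Problem} (with the hard cap replacing the communication cost) yields optimal $f_T,g_T,h_T$ that depend on the history only through $(X_T,E_T)$, $(\tilde X_T,E_T)$ and $(\tilde Y_T,S_T,E_T)$ respectively.

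For the inductive step, suppose the policies from $t+1$ onward are of the restricted form, so $J_{t+1}^\ast$ is a function of $E_{t+1}=E_t-U_t$ alone. The sensor's expected total future cost at stage $t$ splits as
$$
\mathbb{E}\!\left[(X_t-\hat X_t)^2\;\Big|\;X_{1:t},U_{1:t-1}\right]+\mathbb{E}\!\left[J_{t+1}^\ast(E_t-U_t)\;\Big|\;X_{1:t},U_{1:t-1}\right].
$$
The second summand is already a function of $(U_t,E_t)$. For the first, independence of $(X_t,V_t)$ from $(X_{1:t-1},V_{1:t-1})$ combined with the tower property gives $\mathbb{E}[X_t\,|\,\tilde Y_{1:t},S_{1:t}]=\mathbb{E}[X_t\,|\,\tilde Y_t,S_t,E_t]$ whenever the encoding at stage $t$ uses only the current sample and $E_t$; an analogous reduction holds for the encoder's conditional expectation of $X_t$ given $(\tilde X_{1:t},Y_{1:t-1})$. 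Hence restricting $f_t,g_t,h_t$ to depend only on their current input and $E_t$ cannot increase the total cost, closing the induction.

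The main obstacle, which the backward induction must handle carefully, is that in the unrestricted policy class the encoder could in principle embed past data into $Y_t$, so that $\tilde Y_{1:t-1}$ carries residual information about $X_t$ through the encoding map itself; this creates a chicken-and-egg issue between the reductions for $g_t$ and $h_t$. I would break this circularity by reducing agent-by-agent in a fixed order: first reduce $h_t$ for arbitrary $\mathbf{f},\mathbf{g}$, using only that past data is independent of the current pair $(X_t,V_t)$ and hence irrelevant for estimating $X_t$ beyond what $E_t$ encodes about the encoding map in use; next reduce $g_t$, since an encoder faced with a decoder that only reads $(\tilde Y_t,S_t,E_t)$ gains nothing from embedding past symbols; finally reduce $f_t$ by the argument above. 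Iterating this one-agent improvement inside the backward induction delivers the theorem.
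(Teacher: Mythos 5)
Your overall plan (backward induction, exploiting that $\{X_t\},\{V_t\}$ are i.i.d.\ and that the past enters the cost and constraint only through $E_t$) is the same skeleton as the paper's proof, and your inductive step --- once the stage-$T$ reduction is available, the cost-to-go is a function of $E_{t+1}=E_t-U_t$ alone and the argument repeats --- matches the paper. The gap is in the within-stage reduction, precisely at the ``chicken-and-egg'' point you flag. Your proposed resolution --- first reduce $h_t$ \emph{for arbitrary} $\mathbf{f},\mathbf{g}$ --- is not valid: if the encoder's map at stage $t$ depends on its private past, e.g.\ $Y_t=g_t(\tilde X_t,\tilde X_{t-1})$, then the decoder's past observations $(\tilde Y_{1:t-1},S_{1:t-1})$, while independent of $X_t$ itself, are correlated with the realized ``key'' $\tilde X_{t-1}$ inside the encoding map, and the conditional mean $\E[X_t\mid\tilde Y_{1:t},S_{1:t}]$ genuinely depends on them. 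So against such an encoder, a decoder restricted to $(\tilde Y_t,S_t,E_t)$ can do strictly worse. Your parenthetical ``beyond what $E_t$ encodes about the encoding map in use'' implicitly assumes the encoder is already restricted to $(\tilde X_t,E_t)$, which is exactly the circularity you set out to break; the same obstruction appears if you try to reduce the encoder first against arbitrary decoders. Person-by-person elimination in a fixed order does not go through here.

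The paper breaks the circularity differently, with an information-expansion sandwich at each stage: it compares the original stage-$T$ problem (\textit{Problem T1}) with a relaxed problem in which the \emph{entire} past record $I_T$ is shared by sensor, encoder and decoder (\textit{Problem T2}), and a restricted problem in which only $E_T$ is available (\textit{Problem T3}), giving $\min J_{T_2}\leq\min J_{T_1}\leq\min J_{T_3}$. In \textit{Problem T2} all agents condition on the same $I_T$, so conditional on each realization the stage problem has identical data except through $E_T$ (since $(X_T,V_T)\perp I_T$ and the cost, power constraint and communication constraint depend on $I_T$ only via $E_T$); hence $\min J_{T_2}=\min J_{T_3}$, and the chain collapses, showing the restricted policies lose nothing in \textit{Problem T1}. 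If you want to salvage your agent-by-agent route, you would have to first pass to this common-information relaxation (or give an averaging argument over past realizations of the jointly conditioned problem); as written, the decoder-first reduction step fails, and that step is the crux of the theorem.
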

\begin{proof}At time $t=T$, we want to design $(f_T,g_T,h_T)$ to minimize
$$
J_{T_1}(f_T,g_T,h_T)=\E\left\{{(X_T-\hat{X}_T)}^2\right\}
$$
where $U_T=f_T(X_{1:T},U_{1:T-1})$, $Y_T=g_T(\tilde{X}_{1:T},Y_{1:T-1})$, $\hat{X}_T$ $=h_T(\tilde{Y}_{1:T},S_{1:T})$. We call this problem \textit{Problem T1}. Denote by $I_{sT},I_{eT},I_{dT}$ the information about the past system states available to the sensor, the encoder, and the decoder, respectively, at time $T$, i.e., $I_{sT} = \{X_{1:T-1},U_{1:T-1}\}$, $I_{eT} = \{\tilde{X}_{1:T-1},Y_{1:T-1}\}$, and $I_{dT} = \{\tilde{Y}_{1:T-1},S_{1:T-1}\}$. Then the decisions at time $T$ are generated by $U_T=f_T(X_T,I_{sT})$, $Y_T=g_T(\tilde{X}_T,I_{eT})$, $\hat{X}_T=h_T(\tilde{Y}_T,S_T,I_{dT})$.

Denote by $I_T$ the information set about the past system states at time $T$, namely,
$$
I_T=\{X_{1:T-1},U_{1:T-1},\tilde{X}_{1:T-1},Y_{1:T-1},\tilde{Y}_{1:T-1},S_{1:T-1}\}
$$
Then $I_{sT},I_{eT},I_{dT}\in I_T$. Consider another problem, which we call \textit{Problem T2}, where $I_T$ is available to the sensor, the encoder, and the decoder, and we want to design $(f_T^{\prime},g_T^{\prime},h_T^{\prime})$ to minimize
$$
J_{T_2}(f_T^\prime,g_T^\prime,h_T^\prime)=\E\left\{{(X_T-\hat{X}_T)}^2\right\}
$$
where $U_T=f_T^\prime(X_T,I_T)$, $Y_T=g_T^\prime(\tilde{X}_T,I_T)$, $\hat{X}_T=  h_T^\prime(\tilde{Y}_T,S_T,I_T)$. Since the sensor, the encoder, and the decoder can always ignore the redundant information and behave as if they only know $I_{sT},I_{eT},I_{dT}$, respectively, the system in \textit{Problem T2} cannot perform worse than the system in \textit{Problem T1}, i.e.,
$$
\underset{(f_T^\prime,g_T^\prime,h_T^\prime)}{\min}J_{T_2}(f_T^\prime,g_T^\prime,h_T^\prime) \leq \underset{(f_T,g_T,h_T)}{\min}J_{T_1}(f_T,g_T,h_T)
$$
Similarly, consider a third problem, which we call \textit{Problem T3}, where only $E_T$ is available to the sensor, the encoder, and the decoder. We want to design $(f_T^{\prime\prime},g_T^{\prime\prime},h_T^{\prime\prime})$ to minimize
$$
J_{T_3}(f_T^{\prime\prime},g_T^{\prime\prime},h_T^{\prime\prime})=\E\left\{{(X_T-\hat{X}_T)}^2\right\}
$$
where $U_T=f_T^{\prime\prime}(X_T,E_T)$, $Y_T=g_T^{\prime\prime}(\tilde{X}_T,E_T)$, $\hat{X}_T$ $= h_T^{\prime\prime}(\tilde{Y}_T,S_T,E_T)$. Since $E_T$ can be deduced from $I_{sT},I_{eT},I_{dT}$, by a similar argument as above, the performance of the system in \textit{Problem T1} is no worse than the performance of the system in \textit{Problem T3}, that is,
$$
\underset{(f_T,g_T,h_T)}{\min}J_{T_1}(f_T,g_T,h_T) \leq \underset{(f_T^{\prime\prime},g_T^{\prime\prime},h_T^{\prime\prime})}{\min}J_{T_3}(f_T^{\prime\prime},g_T^{\prime\prime},h_T^{\prime\prime})
$$
Let us now return to \textit{Problem T2}. Since the distortion function $\rho(\cdot,\cdot)$ and the power constraint of the encoder do not depend on $I_T$, the communication constraint depends on $I_T$ only via $E_T$,
and $\{X_t\}$ and $\{V_t\}$ are i.i.d. random processes, $X_T$ and $V_T$ are also independent of $I_T$ and there is no loss of optimality if we restrict $U_T = f_T^\prime(X_T,E_T) \mbox{, } Y_T = g_T^\prime(\tilde{X}_T,E_T) \mbox{, }  \hat{X}_T = h_T^\prime(\tilde{Y}_T,S_T,E_T)$
and
$$
\underset{(f_T^\prime,g_T^\prime,h_T^\prime)}{\min}J_{T_2}(f_T^\prime,g_T^\prime,h_T^\prime) = \underset{(f_T^{\prime\prime},g_T^{\prime\prime},h_T^{\prime\prime})}{\min}J_{T_3}(f_T^{\prime\prime},g_T^{\prime\prime},h_T^{\prime\prime})
$$
The equality above shows that in \textit{Problem T1} the sensor, the encoder and the decoder can ignore their information about the past, namely $I_{sT}$, $I_{eT}$, and $I_{dT}$, respectively, but just consider $E_T$.
Moreover, the optimal cost at time $T$ is a function of $E_T$, denoted by $J^\ast(T,E_T)$. Note that the evolution of $E_t$ is described by
\begin{equation}
\label{evolution of E_t}
\begin{array}{lcl}
E_1 &=& N \vspace{0.2 cm}
\\ E_t &=& E_{t-1} - U_{t-1} \mbox{, }\;\;\;t\geq2
\end{array}
\end{equation}
Therefore at time $T-1$ we want to design $(f_{T-1}$, $g_{T-1}$ , $h_{T-1})$ to minimize
$$
\begin{array}{lll}
\;\;\;J_{T-1}(f_{T-1},g_{T-1},h_{T-1}) \vspace{0.2 cm}
\\ = \E\left\{{(X_{T-1}-\hat{X}_{T-1})}^2\right\} + \E\left\{{(X_T-\hat{X}_T)}^2\right\} \vspace{0.2 cm}
\\ = \E\left\{{(X_{T-1}-\hat{X}_{T-1})}^2\right\} + \E\Big\{J^\ast\big(T,E_{T})\big)\Big\} \vspace{0.2 cm}
\\ = \E\left\{{(X_{T-1}-\hat{X}_{T-1})}^2\right\}  + \E\Big\{J^\ast\big(T,E_{T-1} - U_{T-1}\big)\Big\}
\end{array}
$$
The first term depends on the choice of $(f_{T-1}$, $g_{T-1}$ , $h_{T-1})$, and the second part depends only on $E_{t-1}$ and $f_{T-1}$. By an argument similar to the one above, when minimizing $J_{T-1}(f_{T-1},g_{T-1},h_{T-1})$, it is sufficient for decision makers to consider only $E_{t-1}$ instead of $I_{s(T-1)},I_{e(T-1)},I_{d(T-1)}$.
Similarly, the optimal cost starting from time $T-1$ is a function of $E_{T-1}$, to be denoted by $J^\ast(T-1,E_{T-1})$. By induction one can show that there is no loss of generality by restricting $U_t = f_t (X_t,E_t) , Y_t = g_t (\tilde{X}_t,E_t) , \hat{X}_t = h_t(\tilde{Y}_t,S_t,E_t)$, and the optimal cost starting from time $t$ is a function of $E_{t}$, to be denoted by $J^\ast(t,E_t)$
\end{proof}

The proof of Theorem \ref{consider only on current states} also shows that the optimal cost function $J^\ast(t,E_t)$ and the optimal policies $(f^\ast_t,g^\ast_t,h^\ast_t)$ can be computed by the standard dynamic programming equation \cite{bertsekas1995dynamic} as follows,
$$
\begin{array}{ccl}
J^\ast(T+1,\cdot) = 0 \vspace{0.2 cm}
\\ J^\ast(t,E_t) = \underset{f_t,g_t,h_t}{\min} \Big\{\E[(X_t-\hat{X}_t)^2]+ \E[J^\ast(t+1,E_{t+1})]\Big\}
\end{array}
$$
where the evolution of $E_t$ is described by \eqref{evolution of E_t}, and $f_t( \cdot ,0)=0$ due to the constraint on the communication opportunities. Depending on the realization of $X_t$, $E_{t+1}$ may be $E_t$ or $E_t-1$. Therefore the dynamic programming equation can also be written as
\begin{equation}
\label{dynamic programming equation}
\begin{array}{lcl}
 J^\ast(t,E_t) = \underset{f_t,g_t,h_t}{\min} \bigg\{\E[(X_t-\hat{X}_t)^2]+J^\ast(t+1,E_t-1) \vspace{0.2 cm}
\\ \cdot \displaystyle \int f_t(x,E_t)p_X(x)dx + J^\ast(t+1,E_t)\cdot \displaystyle \int (1-f_t(x,E_t)) \vspace{0.2 cm}
\\ p_X(x)dx\bigg\} = J^\ast(t+1,E_t) + \underset{f_t,g_t,h_t}{\min} \bigg\{\E[(X_t-\hat{X}_t)^2] \vspace{0.2 cm}
\\ + c_t(E_t) \cdot \displaystyle \int f_t(x,E_t)p_X(x)dx \bigg\}
\end{array}
\end{equation}
where $c_t(E_t) = J^\ast(t+1,E_t-1)-J^\ast(t+1,E_t)$. Note that the minimization in \eqref{dynamic programming equation} above is just the one-stage problem discussed in section \ref{Prior Work} with communication cost $c_t(E_t)$. Hence we have the following theorem.
\begin{theorem}
\label{Optimal scheduling encoding decoding}
Consider the sensor scheduling and remote estimation problem described in section \ref{ProblemFormulation}. If we restrict the sensor to apply the symmetric threshold based policy (introduced in section \ref{Prior Work}), then the optimal policies for the sensor, the encoder, and the decoder can be described, respectively, as follows:
$$
\begin{array}{lcr}
f_t^\ast(X_t,E_t) \;\;\;\;=
\begin{cases}
1, & \mbox{if } E_t>0  \mbox{ and } |X_t|>\beta_t^\ast(E_t) \vspace{0.2cm}
\\ 0, & \mbox{if } E_t=0  \mbox{ or}\;\;\; |X_t|\leq\beta_t^\ast(E_t)
\end{cases}
\vspace{0.2cm}
\\g_t^\ast(\tilde{X}_t,E_t)\;\;\;\;=
\begin{cases}
\alpha|\tilde{X}_t|-\alpha\beta_t^\ast(E_t)-\alpha\lambda^{-1}, & \mbox{if } \tilde{X}_t \neq \epsilon \vspace{0.2cm}
\\ \epsilon, & \mbox{if } \tilde{X}_t = \epsilon
\end{cases}
\vspace{0.2cm}
\\h_t^\ast(\tilde{Y}_t,S_t,E_t)=
\begin{cases}
S_t\cdot\left(\frac{1}{\alpha}\frac{\gamma}{\gamma+1}\tilde{Y}_t+\frac{\gamma}{\gamma+1}\lambda^{-1}+\beta^\ast_t(E_t) \right), \vspace{0.2cm}
\\\;\;\;\;\;\;\;\;\;\;\;\;\;\;\;\;\;\;\;\;\;\;\;\;\;\;\;\;\;\;\;\;\;\;\;\;\;\; \mbox{if } \tilde{Y}_t,S_t \neq \epsilon \vspace{0.2cm}
\\ 0, \;\;\;\;\;\;\;\;\;\;\;\;\;\;\;\;\;\;\;\;\;\;\;\;\;\;\;\;\;\;\;\;\;\; \mbox{if } \tilde{Y}_t,S_t = \epsilon
\end{cases}
\end{array}
$$
where $\beta^\ast_t(E_t) = \sqrt{c_t(E_t)+m}$, $c_t(E_t) = J^\ast(t+1,E_t-1)-J^\ast(t+1,E_t)$, $m=\frac{1}{\gamma+1}\frac{1}{\lambda^2}$, $\alpha = \lambda\sqrt{P_T}$, $\gamma=\frac{P_T}{k\theta^2}$.
\end{theorem}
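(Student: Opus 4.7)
The plan is to read off the result directly from the dynamic programming equation \eqref{dynamic programming equation} together with Theorem \ref{One-stage Problem}. The DP recursion, as already established in the excerpt, rewrites the stage-$t$ subproblem (for $E_t > 0$) as
$$
\min_{f_t,g_t,h_t} \Bigl\{ \E[(X_t-\hat X_t)^2] + c_t(E_t)\!\int\! f_t(x,E_t)\,p_X(x)\,dx \Bigr\},
$$
with $c_t(E_t) = J^\ast(t+1,E_t-1) - J^\ast(t+1,E_t)$. Since $f_t(X_t,E_t) \in \{0,1\}$, the integral equals $\E[U_t]$, so the bracketed quantity is $\E[c_t(E_t)\,U_t + (X_t-\hat X_t)^2]$. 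For fixed $E_t$, this is exactly the one-stage problem of Section \ref{Prior Work} with communication cost $c = c_t(E_t)$; by Theorem \ref{consider only on current states} there is no loss in letting $(f_t,g_t,h_t)$ depend only on $(X_t,E_t)$, $(\tilde X_t,E_t)$, $(\tilde Y_t,S_t,E_t)$, respectively, so the reduction is literal.

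I would then do a backward induction on $t$. At $t = T$ we have $J^\ast(T+1,\cdot)=0$, so $c_T(E_T)=0$ and Theorem \ref{One-stage Problem} with $c=0$ yields threshold $\beta_T^\ast(E_T)=\sqrt{m}$ together with the stated encoder/decoder; the case $E_T=0$ is forced by $f_T(\cdot,0)=0$ and the ``$\epsilon$'' branches cover it. For the inductive step, assuming $J^\ast(t+1,\cdot)$ has been constructed (so $c_t(E_t)$ is well-defined), the bracketed subproblem above is precisely the one-stage problem and Theorem \ref{One-stage Problem} (applied within the class of symmetric threshold policies, which is exactly the restriction imposed in the present theorem) delivers the threshold $\beta_t^\ast(E_t)=\sqrt{c_t(E_t)+m}$ and the encoder/decoder formulas with $\beta^\ast$ replaced by $\beta_t^\ast(E_t)$. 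The cost-to-go $J^\ast(t,E_t)$ is then obtained by evaluating \eqref{dynamic programming equation}, closing the induction.

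One prerequisite that I would make explicit as a short lemma is that $c_t(E_t) \geq 0$, so that $\sqrt{c_t(E_t)+m}$ is unambiguous and Theorem \ref{One-stage Problem} applies verbatim. This amounts to showing $E \mapsto J^\ast(t+1,E)$ is non-increasing, which follows from a one-line coupling: any admissible policy for the system with $E$ transmission opportunities remains admissible with $E+1$, and the i.i.d.\ structure of $\{X_t\},\{V_t\}$ means the extra opportunity cannot increase cost. A second bookkeeping point is the boundary $E_t=0$: the hard constraint forces $U_t=0$, making $\tilde X_t=Y_t=\tilde Y_t=S_t=\epsilon$, and all three policies collapse to the ``$\epsilon$'' / zero branches displayed in the theorem.

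The main obstacle, as I see it, is not any one calculation but the conceptual step of certifying that the stagewise reduction to the one-stage problem is lossless. The argument rests entirely on two ingredients already in hand: Theorem \ref{consider only on current states}, which removes dependence on past information and leaves only $E_t$ as a sufficient state; and the additive separability of \eqref{dynamic programming equation}, which isolates the effect of $U_t$ into a scalar penalty $c_t(E_t)$. Once these are in place, the theorem is essentially a parameter substitution $c \leftarrow c_t(E_t)$ into Theorem \ref{One-stage Problem}.
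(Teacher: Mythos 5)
Your proposal is correct and follows essentially the same route as the paper: the paper proves this theorem exactly by observing that the minimization in the dynamic programming equation \eqref{dynamic programming equation} is the one-stage problem of Section \ref{Prior Work} with communication cost $c_t(E_t)$, and then invoking Theorem \ref{One-stage Problem} with $c$ replaced by $c_t(E_t)$, the state reduction to $E_t$ having been supplied by Theorem \ref{consider only on current states}. Your explicit lemma that $c_t(E_t)\geq 0$ (via the monotonicity of $J^\ast(t+1,\cdot)$ in the number of remaining opportunities) is a small point the paper leaves implicit, and it is a worthwhile addition rather than a deviation.
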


\begin{remark}
$c_t(E_t)$ can be interpreted as the opportunity cost for choosing to communicate with the estimator rather than not to communicate.
\end{remark}
\begin{remark}
Consider the case where $E_t > T-t$, that is, the sensor is always allowed to communicate with the estimator for the rest of time. First, we note that the opportunity cost $c_t(E_t)$ is zero. Also, even though the sensor can always communicate with the estimator, the optimal communication policy is still the threshold-based policy with threshold $\beta_t^\ast(E_t) = \sqrt{m} > 0$, which might seem counter-intuitive: why would the sensor not transmit its observation although it is allowed to do so?  This surprising result is due to the fact that threshold information, i.e., whether or not the state sample belongs to a fixed, known interval, might be more informative than a noisy observation of the state at the output of the noisy channel. Hence, it might be better not to communicate explicitly over the noisy channel but rely on the side channel which signals where the sample lies. For example, at the extreme case of a very noisy channel ($\gamma \rightarrow 0$) the output of the communication channel, $\tilde{Y}_t$, is effectively useless, irrespective of the realization $X_t$. However, depending on the threshold and the realization $X_t$, thresholding information could be significantly more informative.  \end{remark}

\section{Numerical Results}
\label{numerical results}
By plugging the optimal sensor scheduling, encoding, and decoding policies $(f^\ast_t,g^\ast_t,h^\ast_t)$ described in Theorem \ref{Optimal scheduling encoding decoding} into the dynamic programming equation \eqref{dynamic programming equation}, we get the explicit update rule for the optimal cost function $J^\ast(t,E_t)$, as shown below
$$
\begin{array}{lcl}
J^\ast(t,E_t) &=& J^\ast(t+1,E_t) + 2\lambda^{-2}, \;\;\;\;\;\;\;\;\;\;\;\; \mbox{ if }  E_t=0 \vspace{0.2 cm}
\\ J^\ast(t,E_t) &=& J^\ast(t+1,E_t) + 2\lambda^{-2}- 2\lambda^{-2}  \vspace{0.2 cm}
\\ &\;&  \cdot\big(2\beta^\ast_t(E_t)\lambda+ 1\big)\cdot e^{-\lambda\beta^\ast_t(E_t)},\;\; \mbox{ if }  E_t>0
\end{array}
$$
where $\beta^\ast_t(E_t) = \sqrt{c_t(E_t)+m}$, $m=\frac{1}{\gamma+1}\frac{1}{\lambda^2}$. The computation complexity of the dynamic programming equation is $O(TN)$, where $T$ is the time horizon and $N$ is the number of transmission opportunities.

We choose parameters as follows: $T=100$, $\lambda=1$. In particular, we choose $k = 10,1,0.1$, which corresponds to signal to noise ratio (SNR) $\gamma = 0.1,1,10$ by \eqref{assumptions on parameters}. We solve \eqref{dynamic programming equation} by applying the update rule for optimal cost function as described above. We plot the optimal 100-stage estimation error versus the number of communication opportunities under different SNRs, as shown in Figure \ref{fig: 100 stage versus numb of comm}.
\begin{figure}[h]
\centering
\includegraphics[height=60mm]{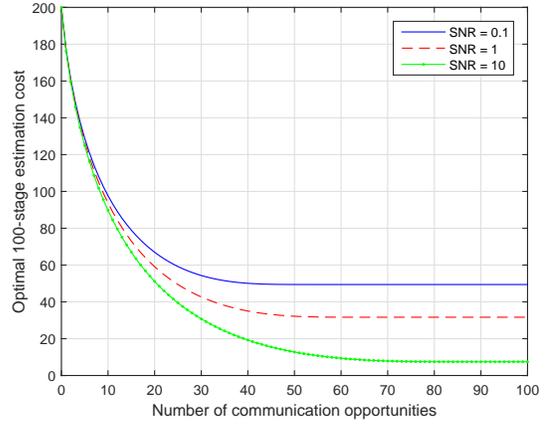}
\caption{100-stage estimation error vs. number of communication opportunities}
\label{fig: 100 stage versus numb of comm}
\end{figure}

One can see that, as to be expected, for each fixed SNR, the optimal 100-stage estimation error is non-increasing in terms of the number of communication opportunities. To be more specific, there exists a threshold on the number of communication opportunities (call it \textit{opportunity threshold}) such that the optimal 100-stage estimation error decreases when the number of communication opportunities is below the threshold, and it stays constant above the threshold. We call \textit{minimal error} as the optimal 100-stage estimation error with the number of communication opportunities above the opportunity threshold. One can also see from Fig. \ref{fig: 100 stage versus numb of comm} that when the SNR increases, the opportunity threshold increases, and the minimal error decreases.

The existence of opportunity threshold can be interpreted as follows: since we restrict the sensor to apply the threshold based policy with threshold $\beta^\ast_t(E_t) = \sqrt{c_t(E_t)+m} \geq \sqrt{m}$, the expected number of communication opportunities that will be used is upper bounded by $T\cdot \Prob (|X_t| \geq \sqrt{m}) = Te^{-\lambda m} $. Therefore when the communication opportunities is greater than $Te^{-\lambda m}$, the optimal expected estimation error will not decrease even though the sensor can have more communication opportunities. It can also be checked from Fig. \ref{fig: 100 stage versus numb of comm} that the opportunity thresholds under different signal to noise ratios are roughly $Te^{-\lambda m}$. Moreover, since $m=\frac{1}{\gamma+1}\frac{1}{\lambda^2}$, $Te^{-\lambda m}=Te^{-\frac{1}{\lambda(\gamma+1)}}$, which is an increasing function of the SNR $\gamma$. Therefore the opportunity threshold increases as the SNR increases.

Fig. \ref{fig:sample path} depicts a sample path of the number of communication opportunities left when the sensor applies the threshold based scheduling policy described in Theorem \ref{Optimal scheduling encoding decoding}. When generating the plot we chose $T=100$, $\lambda=1$, $\gamma = 0.1$, and the number of communication opportunities $N=50$. One can see that all the communication opportunities are not used up by the end of the time horizon.
\begin{figure}[h]
\centering
\includegraphics[height=60mm]{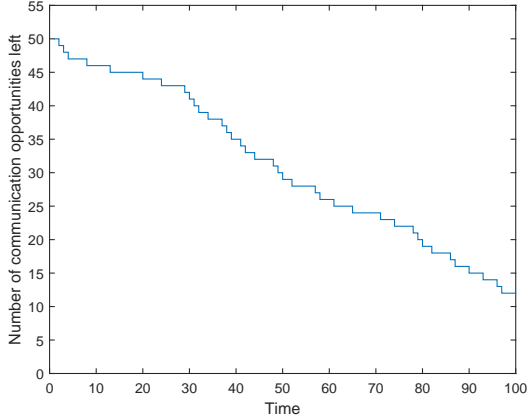}
\caption{A sample path of the number of communication opportunities left vs. time}
\label{fig:sample path}
\end{figure}

When the number of communication opportunities is larger than the opportunity threshold, the optimal estimation error does not change with respect to the number of communication opportunities. Without loss of generality we can assume that the sensor is allowed to communicate at each step, that is, $N=T$. Then the opportunity cost is $c_t(E_t)=0$. Recall that $\beta^\ast_t(E_t) = \sqrt{c_t(E_t)+m}$ and $m=\frac{1}{\gamma+1}\frac{1}{\lambda^2}$. Hence the update rule for the cost function can be simplified as follows:
$$
J^\ast(t,T) = J^\ast(t+1,T) + \Big(\frac{2}{\lambda^2} - \big (\frac{2\sqrt{m}}{\lambda}+\frac{2}{\lambda^2}\big ) \cdot e^{-\lambda\sqrt{m}} \Big)
$$
with $J^\ast(T+1,T)=0$, which implies that
$$
\begin{array}{lcl}
J^\ast(1,T) &=& T \cdot \Big(\dfrac{2}{\lambda^2} - \big (\dfrac{2\sqrt{m}}{\lambda}+\dfrac{2}{\lambda^2}\big ) \cdot e^{-\lambda\sqrt{m}} \Big) \vspace{0.2 cm}
\\ &=& T \cdot 2\lambda^{-2} \cdot \Big[1 - \big (\dfrac{1}{\sqrt{1+\gamma}}+1\big ) \cdot e^{-\frac{1}{\sqrt{1+\gamma}}} \Big]
\end{array}
$$
It is straightforward to check that $J^\ast(1,T)$ is a decreasing function of the SNR $\gamma$. Hence, the minimal error decreases as the SNR increases.

We plot the opportunity threshold $Te^{-\lambda m}$ versus the minimal error $J^\ast(1,T)$ under different SNRs (dash line) in Fig. \ref{fig: 100 stage versus numb of comm} and arrive at Fig. \ref{fig:dash line}.
\begin{figure}[h]
\centering
\includegraphics[height=60mm]{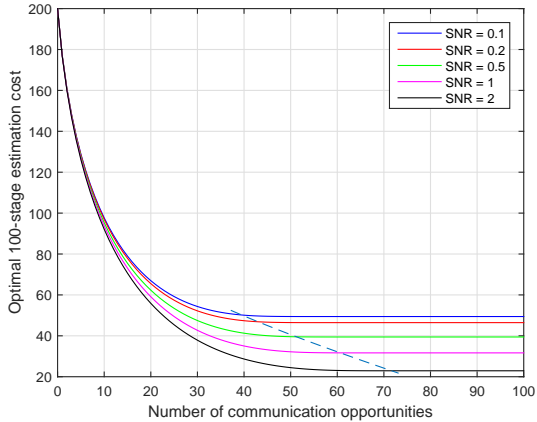}
\caption{Opportunity threshold vs. minimal error under different signal to noise ratios}
\label{fig:dash line}
\end{figure}

One can see that the intersection between the dash line and each solid line is roughly the turning point of the solid line. Therefore, the plot of opportunity threshold versus minimal error under different SNRs is an important one. In fact, the plot suggests the lowest capacity of the battery that one should choose when building a physical system so that the expected estimation error is minimized. Furthermore the plot predicts the minimal expected estimation error.

Consider the asymptotic case where the SNR $\gamma\rightarrow\infty$, and thus $m=\frac{1}{\gamma+1}\frac{1}{\lambda^2}\rightarrow 0$. Then the opportunity threshold $Te^{-\lambda m}\rightarrow T$, and the minimal error $J^\ast(1,T)\rightarrow 0$. Hence, the optimal 100-stage estimation error will be strictly decreasing in terms of the number of communication opportunities in the asymptotic case, as also noted in prior work \cite{Imer10}. Furthermore, the estimation error will reach zero when the number of communication opportunities equals the time horizon.

\section{Conclusions}
\label{Conclusions}
In this paper, we have considered a sensor scheduling and remote estimation problem with limited communication opportunities and noisy communication channel. 
The main contribution is that, as opposed to prior work that assume a perfect communication channel in the problem formulation, we have solved here the problem with an additive noise channel. For a Laplacian source and Gamma channel noise, we have obtained the optimal encoding and estimation policies and  the optimal transmission schedule using dynamic programming.  Our analysis has uncovered a rather surprising result: There might be cases where the sensor does not use all the available transmission opportunities, which is in a sharp contrast with the noiseless setting, where all communication opportunities are used. Future directions for research include extensions to higher dimensional spaces and multi-channel settings.

\bibliographystyle{unsrt}
\bibliography{references}

\end{document}